\documentclass[12pt,a4paper]{amsart}
\usepackage[margin=2.5cm]{geometry}

\usepackage[utf8]{inputenc}
\usepackage{amsmath,amssymb,amsfonts,amsthm,amsopn,amscd}
\usepackage{dsfont}
\usepackage{graphicx}
\usepackage{longtable}
\usepackage{titletoc}
\usepackage[section]{placeins}
\usepackage{thm-restate}
\usepackage{booktabs}\renewcommand{\arraystretch}{1.2}
\usepackage{kbordermatrix}
\usepackage{enumitem}

\usepackage{mathtools}
 \usepackage{bm}
 
\usepackage{upgreek}
\usepackage{xcolor}
\usepackage{nameref}
\usepackage[breaklinks]{hyperref}
\hypersetup{
    colorlinks,
    linkcolor={red!40!black},
    citecolor={blue!50!black},
    urlcolor={blue!20!black}
}

\DeclareMathOperator{\tr}{tr}

\DeclareMathOperator{\wt}{wt}

\newcommand{\bra}[1]{\mathinner{\langle #1|}}
\newcommand{\ket}[1]{\mathinner{|#1\rangle}}

\newcommand{\ot}[0]{\otimes}
\newcommand{\one}[0]{\mathds{1}}
\renewcommand{\a}{\alpha}

\usepackage{thmtools, thm-restate}
\newtheorem{theorem}{Theorem}
\newtheorem*{theorem*}{Theorem}
\newtheorem{proposition}[theorem]{Proposition}

\newtheorem{example}[theorem]{Example}
\newtheorem{remark}[theorem]{Remark}
\newtheorem{problem}{Open Problem}
\newtheorem*{problem*}{Open Problem}

\newtheorem*{result*}{Result}

\newcommand{\N}{\mathds{N}}
\newcommand{\R}{\mathds{R}}
\newcommand{\C}{\mathds{C}}
\newcommand{\Q}{\mathds{Q}}
\newcommand{\E}{\mathcal{E}}

\newcommand{\II}{\mathcal{I}}

\newcommand{\nn}{\nonumber}

\begin{document}

\title[
SDP bounds on quantum codes: rational certificates
]{
SDP bounds on quantum codes: rational certificates}

\date{\today}

\author{Gerard Anglès Munné}
\address{
Gerard Anglès Munné,
Faculty of Mathematics, Physics and Informatics,
University of Gdańsk,
Wita Stwosza 57, 80-308 Gdańsk, Poland
}

\author{Felix Huber}
\address{
Felix Huber,
Division of Quantum Information,
Faculty of Mathematics, Physics and Informatics,
University of Gdańsk,
Wita Stwosza 57, 80-308 Gdańsk, Poland
}
\email{felix.huber@ug.edu.pl}

\thanks{
We thank 
Markus Grassl,
David de Laat, 
Nando Leijenhorst,
and
Andrew Nemec
for discussions.
GAM was supported by NCN grant no. 2024/53/B/ST2/04103.
FH was funded in whole or in part by the National Science Centre, Poland 2024/54/E/ST2/00451
and by the Polish National Agency for Academic Exchange under the Strategic Partnership Programme grant BNI/PST/2023/1/00013/U/00001.
For the purpose of Open Access,
the author has applied a CC-BY public copyright licence to any Author Accepted Manuscript (AAM) version arising from this submission. We acknowledge the use of a computational server financed by the Foundation for Polish Science (IRAP project, ICTQT, contract no. 2018/MAB/5/AS-1, co-financed by EU within Smart Growth Operational Programme).
}

\begin{abstract}
A fundamental problem in quantum coding theory is to 
determine the maximum size of quantum codes 
of given block length and distance. 
A recent work 
introduced bounds based on semidefinite programming, strengthening the well-known quantum linear programming bounds.
However, floating-point inaccuracies prevent the extraction 
of rigorous non-existence proofs
from the numerical methods.
Here, we address this by providing rational infeasibility certificates for a range of quantum codes.
Using a clustered low-rank solver with heuristic rounding to algebraic expressions, we can improve upon $18$ upper bounds on the maximum size of 
$n$-qubit codes with $6 \leq n \leq 19$.
Our work highlights the practicality and scalability 
of semidefinite programming for quantum coding bounds. 
\end{abstract}

\maketitle

\section{Introduction}
Quantum error correcting codes are widely thought to play an integral part in future quantum computers, 
making the computation reliable by protecting quantum information from environmental noise and decoherence.
A long-standing problem is to understand for what parameters a quantum code exists. 
In particular, we ask the question: 
given block length $n$ and distance $\delta$,
what is the maximum code size $K$ for which a 
$(\!(n,K,\delta)\!)_2$ quantum code exists?
Similar to the classical situation, 
both analytical~\cite{ekert1996errorcorrectionquantumcommunication,681315,PhysRevA.69.052330,PhysRevA.81.032318,PhysRevLett.118.200502,Huber2020quantumcodesof,wei2026theorylowweightquantumcodes} and linear programming (LP) methods~\cite{PhysRevLett.78.1600,681315,796376,wei2026theorylowweightquantumcodes} provide upper bounds on the parameters of quantum codes.
At the same time, explicit 
and randomized constructions 
serve as lower bounds~\cite{
quant-ph/9705052v1,  
PhysRevLett.79.953,
PhysRevLett.78.405,
rains1997quantumcodesminimumdistance,
PhysRevLett.99.130505,
Hastings2014,
Rigby_2019,
Breuckmann_2021,
Panteleev_2022}.
Tables of upper and lower bounds on achievable parameters can be found in Refs.~\cite{table,Rigby_2019, HuberWyderka:ametable, Huber2020quantumcodesof}.

Inspired by semidefinite programming  (SDP) bounds for classical codes~\cite{Vallentin2019}, 
a recent work by the authors introduced a similar machinery for quantum codes~\cite{munne2025sdpboundsquantumcodes}.
However, both linear and semidefinite programming are limited by the fact that they operate with floating-point arithmetic.
The resulting approximate infeasibility certificates then yield coding bounds that are, at best,
formally unproven but correct;
while, at worst, give entirely spurious bounds due to numerical errors.

Here, we rectify this situation by introducing rigorous certificates that act as formal proofs of coding bounds.
The key idea is to use numerical semidefinite programs from which one can extract rational certificates:
a method introduced by Ref.~\cite{PaPe08} and 
refined since in a sequence of papers~\cite{
KALTOFEN2008359,
doi:10.1137/20M1351692,
cohn2024optimalitysphericalcodesexact,
MAGRON2021221} and which is related to central questions in semialgebraic geometry~\cite{Scheiderer2016, LAPLAGNE2024107631}.
These methods have now also found their way to quantum information theory~\cite{naceur2025certifiedboundsoptimizationproblems}.

In this work, we employ a low-rank solver with a heuristic rounding method by Leijenhorst et al.~\cite{Leijenhorst_2024,cohn2024optimalitysphericalcodesexact}, 
available as a Julia package at~\url{https://github.com/nanleij/ClusteredLowRankSolver.jl}. This solver finds an approximate numerical solution, which is then rounded to an exact solution in the field of algebraic numbers. 
This results in rigorous proofs of the three coding bounds 
found by Ref.~\cite{munne2025sdpboundsquantumcodes}, 
as well as improves upon $18$ further cases 
with block length $6 \leq n \leq 19$. 
All infeasibility certificates can be found at \url{https://github.com/ganglesmunne/SDP_bounds_on_quantum_codes_rational_certificates}.

\section{Linear programming bounds on quantum codes}

A qubit quantum code encodes a $K$-dimensional Hilbert space into a subspace of $n$-qubit Hilbert space. 
A code is said to have a distance $\delta$ if every
error that acts on at most $\delta-1$ subsystems can be detected or 
acts as identity on the code space.
Consequently, the code can corrects errors that affect at most 
$\lfloor (\delta-1)/2 \rfloor$ subsystems.
A qubit quantum code can then be characterized by its parameters $(\!(n,K,\delta)\!)_2$, 
where $n$ is the block-length, $K$ the size of the code, and $2$ stands for qubits.
Let $\Pi$ be the projector onto a $K$-dimensional code subspace. Denote by $\E_n$ a matrix basis composed of $n$-fold tensor products of the three Pauli matrices 
\begin{equation}
    X = \begin{pmatrix}
        0 & 1 \\
        1 & 0
    \end{pmatrix}\,,\quad
   Y = \begin{pmatrix}
        0 & -i \\
        i & 0
    \end{pmatrix}\,,\quad
   Z = \begin{pmatrix}
        1 & 0 \\
        0 & -1
    \end{pmatrix}\,, 
\end{equation}
and the $2\times 2$ identity matrix $\one$. 
We define the weight $\wt(E)$ of an element $E \in \E_n$ as the number of coordinates 
$E$ acts non-trivially on.
The Knill-Laflamme conditions then give a complete characterization of quantum error-correcting codes~\cite{PhysRevA.55.900}: 
Let $\Pi$ be a projector onto a code subspace of rank $K$ acting on $(\C^2)^{\ot n}$.
The code $\Pi$ has distance $\delta$, if and only if
for all $E \in \E_n$ with $0\leq\wt(E)<\delta$, 
\begin{align}
\Pi E \Pi = c_E\Pi\,,
\end{align}
holds where $c_E \in \C$ depends only on $E$. 
A code with $c_E = \tr(E)/2^n$ for all $E$ with 
$0 < \wt(E) < \delta$ is called {\em pure}.

The Knill-Laflamme conditions can also be formulated in the following way~\cite{681316}: A projector $\Pi$ of rank $K$ corresponds to a code with distance $\delta$, if and only if
\begin{align}\label{eq:klaflam}
    KB_j(\Pi)=A_j(\Pi) \quad\quad \text{for} \quad 0 \leq j\leq \delta-1 \,,
\end{align}
where 
\begin{align}\label{eq:enum}
A_j(\Pi) = \sum_{\substack{E\in \E_n \\ \wt(E) = j}} \tr(E\Pi) \tr(E^\dagger\Pi)\,, \quad\quad  B_j(\Pi) =\sum_{\substack{E\in \E_n \\ \wt(E)=j}} \tr(E\Pi E^\dagger\Pi)  \,.
\end{align}
are the quantum weight enumerators.

The two enumerators are linearly related by the {\it quantum MacWilliams identity}~\cite{PhysRevLett.78.1600,681316},
\begin{align}\label{eq:macwill}
	B_j(\Pi) & = 2^{-n}\sum^n_{i=0} K_j(i;n)A_i(\Pi)\,,
\end{align}
where $K_j(i;n)$ are the quaternary Krawtchouk polynomials~\cite{macwilliams1977theory}, 
\begin{align}\label{eq:krawpoly}
		K_j(i;n) = \sum^n_{\alpha=0} (-1)^\a 3^{j-\a}
		\binom{i}{\alpha}  \binom{n-i}{j-\alpha}     \,.
\end{align}
In terms of the enumerator polynomials $A(x,y) = \sum_{j=0}^n x^{n-j} y^j A_j$ and likewise for $B(x,y)$, the identity reads
\begin{equation}
    B(x,y) = A\Big(\frac{x+3y}{2}, \frac{x-y}{2}\Big)\,.
\end{equation}
The enumerators satisfy
\begin{align}
    A_0(\Pi) &= K^2\,,\nn\\
    A_j(\Pi) &\geq 0 \quad \quad \quad\quad \,\text{for} \quad 0 \leq j \leq n \,,\nn \\ 
    KB_j(\Pi) &\geq A_j(\Pi) \quad\quad \text{for} \quad 0\leq j\leq  n \,,
\end{align}
with equality in the last expression for $0\leq j \leq \delta-1$.

The Shadow enumerator
is also non-negative~\cite{796376},
\begin{align}\label{eq:shadow}
 S_j(\Pi)& = 2^{-n}\sum^n_{i=0} (-1)^i K_j(i;n)A_i(\Pi) \geq 0 \,,
\end{align}
for $0 \leq j \leq n$. It can also be expressed in terms of a polynomial transform,
\begin{equation}
    S(x,y) = A\Big(\frac{x + 3y}{2}, \frac{y-x}{2}\Big)\,.
\end{equation}

To determine whether a quantum code with parameters $(\!(n,K,\delta)\!)_2$ does not exist, one can formulate the following linear program (LP).
\begin{align}\label{eq:lp}
		\text{find} 	\quad & \{A_j\}^{n}_{j=0} \nn\\
		\text{subject to}	\quad & A_0=K^2\,, \nn\\
		& A_j, S_j \geq 0 \quad\quad \text{for} \quad 0\leq j\leq n \,, \nn \\
        &  KB_j \geq A_j  \quad\quad \text{for}\quad \delta \leq j\leq n \,, \nn\\
        & KB_j = A_j  \quad\quad \text{for}\quad 0 \leq j\leq \delta-1 \,,
\end{align}
and where $B_j$ and $S_j$ are linearly related to $A_j$ via Eqs.~\eqref{eq:macwill} and \eqref{eq:shadow}, respectively.
The linear program in Eq.~\eqref{eq:lp} can be used to obtain upper bounds on the allowed parameters of quantum codes.
If the LP in Eq.~\eqref{eq:lp} is infeasible, 
no code with the given parameters exists.

Additional constraints restrict the LP in Eq.~\eqref{eq:lp} in 
certain cases further.
\begin{enumerate}[label=(\alph*)]
\item {\bf Pure codes} are those for which detectable errors result in states that are orthogonal to the unaffected state, 
$\bra{i_L} E \ket{i_L} = 0$ for all $E$ with $0 < \wt(E) <\delta$ and all logical code basis $\ket{i_L}$. 
They satisfy
\begin{align}\label{eq:pureLP}
    A_j &= 0 \quad\quad \quad  \text{for} \quad 0 < j < \delta \,,
\end{align}
 
\item {\bf Self-dual  codes~\footnote{Not to be confused with self-dual CSS codes.}} are defined to be pure codes with $K=1$~\footnote{Quantum codes with $K=1$ fulfill the Knill-Laflamme conditions trivially. 
For this reason, they are additionally assumed to be pure, 
which implies that $\tr(\Pi E)=0$ for $0< \wt(E)<\delta $ with $E\in \E_n$.}.
Thus they satisfy Eq.~\eqref{eq:pureLP} and additionally~\cite{796376},
\begin{align}\label{eq:addLP}
 S_j &=0 \quad\quad \quad  \text{if} \quad  n-j \quad \text{is odd}\,, \nn\\
B_j &= A_j \hspace{-0.06cm}\quad\quad\,\,\, \text{for} \quad 0 \leq j \leq n\,.
\end{align}

\item {\bf Additive (or stabilizer) codes} with parameters $[\![n,k,\delta]\!]_2$ encodes $k$ to $n$ qubits, and thus $K=2^k$ with $k\in \N$.
For additive codes 
$A_j\in \mathbb{N}_+$
and
one of the following two cases holds~\cite{681315},
\begin{align}\label{eq:add_type}
	\sum_{j\geq0} A_{2j} =
	\begin{cases}
2^{n-k-1} \quad \text{(Type I)} \\  2^{n-k} \quad \quad \!\text{(Type II)}\,.
	\end{cases}
\end{align}
We use parameters $[\![n,k,\delta]\!]_2$ for additive codes and $(\!(n,K,\delta)\!)_2$ 
for general (e.g. non-additive) quantum codes. 
\end{enumerate}

\begin{example}\label{ex:enum}
The stabilizer group of the five-qubit additive code $[\![5,1,3]\!]_2$ 
is generated by 
\begin{equation}
 S= \langle XZZXI, IXZZX, XIXZZ, ZXIXZ \rangle  \,.
\end{equation}
Its code projector is
\begin{equation}
\Pi = \frac{1}{2^{n-k}} \sum_{s \in S}  s\,.
\end{equation}
from which we can compute the weight distribution $A(\Pi) = (4, 0, 0, 0, 60, 0)$.

\end{example}

\section{Semidefinite Programming bounds on quantum codes}

We sketch the semidefinite programming (SDP) bounds as derived by Ref.
~\cite[Theorem 22]{munne2025sdpboundsquantumcodes}. 
The formulation is inspired by the fact that the Lovász theta number upper bounds classical codes~\cite{1055985}, 
and corresponds to an intermediate level of a complete hierarchy presented in Ref.~\cite[Theorem 7]{munne2025sdpboundsquantumcodes}.

Below, we present three SDP formulations:
the main SDP, a Lov\'asz type SDP, and the symmetry-reduced version of
the main SDP. For the numerical results in Section~\ref{sec:cert}, 
we only use the symmetry-reduced SDP.

\subsection{Main SDP}The key idea is the following: Let $\varrho=\Pi/K$ be a state proportional to the projector of a code subspace $\Pi$. 
Let $\Gamma$ be a matrix which is indexed by all elements of the $n$-qubit Pauli basis 
$\E_n$, with entries determined by the pair $E_a, E_b \in \E_n$ as
\begin{align}\label{eq:moment}
    \Gamma_{ab}(\varrho)=\tr(E^\dagger_a \varrho) \tr(E_b \varrho) \tr(E_aE^\dagger_b\varrho) \,.
\end{align}
By convention, we set $E_0=\one$ and thus $\Gamma_{00}(\varrho)=1$.
One can show that $\Gamma(\varrho)$ is positive semidefinite, 
because it can be constructed as a moment matrix from $\varrho\succeq0$. 
Importantly, the quantum weight enumerators $A_j$~ [Eq.~\eqref{eq:enum}] can be obtained from $\Gamma$ by summing over its diagonal elements,
\begin{align}\label{eq:AfromGamma}
 A_j(\varrho) =\sum_{\substack{E_a\in \E_n \\ \wt(E_a) = j}} \Gamma_{aa}(\varrho) \,.
\end{align}
From $A_j(\varrho)$ one can obtain $B_j(\varrho)$ through the
quantum MacWilliams identity [Eq.~\eqref{eq:macwill}].
One can thus write the Knill-Laflamme conditions in Eq.~\eqref{eq:klaflam} in terms of the entries of the moment matrix $\Gamma(\varrho)$ as
\begin{align}\label{eq:klred}
	\frac{K}{2^n}\sum^n_{i=0} K_j(i;n) \!\!\sum_{\substack{E_{a} \in \E_n \\ \wt(E_a)=i}} \Gamma_{aa}(\varrho) \quad
	= \sum_{\substack{E_b \in \E_n \\ \wt(E_b)=j}} \Gamma_{bb}(\varrho) \quad\quad \text{for} \quad 0 \leq j\leq \delta-1  \,.
\end{align}
 
The matrix $\Gamma(\varrho)$ satisfies several additional conditions,
which follow from the fact that $\Gamma$ 
is constructed from a quantum state that is proportional a the projector of rank $K$~\cite[Section 5]{munne2025sdpboundsquantumcodes}. 
Considering only the real part of $\Gamma(\varrho)$, 
a necessary condition for an
$(\!(n,K,\delta)\!)_2$ quantum code to exist
is the feasibility of the following semidefinite program,
\begin{align}\label{eq:sdpgamma}
	\text{find} 		\quad & \Gamma \succeq 0\,, \nn \\
	\text{subject to} 	\quad
	& \Gamma_{00}=1 \,, \nn\\
        &\Gamma_{ab}=0 \hspace{-0.04cm}\quad\quad\,\,\text{if} \quad  E^\dagger_a E_b = - E_b E^\dagger_a \,, \nn \\
	&	\Gamma_{ab}=\Gamma_{cd} \,\,\quad \text{if} \quad  E_a E^\dagger_b  = E^\dagger_b E_a
	\quad \text{and}
	\quad (E^\dagger_a,E_b,\omega_{ab}E_a E^\dagger_b)\quad\, \nn\\ 
    \quad & \hspace{3.3cm} \text{is a permutation of} \,\,(E^\dagger_{c},E_{d},\omega_{cd}E_{c}E^\dagger_{d})\,, \nn\\
	& \sum^N_{b=0}\Gamma_{ab}=\frac{2^{n}}{K} \Gamma_{aa}\,, \nn \\
	& \frac{K}{2^n}\sum^n_{i=0} K_j(i,n) \!\!\sum_{\substack{E_{a} \in \E_n \\ \wt(E_a)=i}} \Gamma_{aa}
	\,\,\, = \,\,\,
	\sum_{\substack{E_b \in \E_n \\ \wt(E_b)=j}} \Gamma_{bb}  \quad \,\text{if} \quad 0\leq j \leq \delta-1 \,,
\end{align} 
where $N=4^n-1$ and $\omega_{ab} \in \{\pm 1,\pm i\}$.
If the SDP in Eq.~\eqref{eq:sdpgamma} is infeasible, 
then a code with parameters $(\!(n,K,\delta)\!)_2$ does not exist. 

Additional constraints can be imposed to restrict the SDP~\eqref{eq:sdpgamma}  further.
\begin{enumerate}  [label=(\alph*)]
\item {\bf Pure codes.} 
One can additionally impose
\begin{align}\label{eq:selfdual}
\Gamma_{ab}=0
	\quad\quad 
	\text{for} \quad 0 < \wt(E^\dagger_a E_b)<\delta \quad \text{or} \quad 0 < \wt(E^\dagger_a)<\delta \quad \text{or} \quad  0 < \wt(E_b)<\delta\,.
\end{align}
\item {\bf Self-dual codes.} The conditions in (a) apply, together with

\begin{align}\label{eq:addLP_matrix}
 \sum^n_{i=0} (-1)^i K_j(i,n) \!\!\sum_{\substack{E_{a} \in \E_n \\ \wt(E_a)=i}} \Gamma_{aa} &=0 \quad\quad \quad  \text{if} \quad  n-j \quad \text{is odd}\,, \nn\\
\frac{K}{2^n}\sum^n_{i=0} K_j(i,n) \!\!\sum_{\substack{E_{a} \in \E_n \\ \wt(E_a)=i}} \Gamma_{aa}
	\,\,\, &= \,\,\,
	\sum_{\substack{E_b \in \E_n \\ \wt(E_b)=j}} \Gamma_{bb}  \quad \,\text{for} \quad 0\leq j \leq n \,,
\end{align}

\item {\bf Additive codes.} one can further impose that
\begin{align}
\Gamma_{ab}\geq 0 \quad \text{and}  \quad \Gamma_{ab}\in \mathbb{Z}.
\end{align}
Also, the constraint for additive codes of type I/II reads,
\begin{align}\label{eq:gamma_add_type}
	\sum_{j\geq 0} \sum_{\substack{E_{a} \in \E_n \\ \wt(E_a)=2j}} \Gamma_{aa}
	=\begin{cases}
2^{n-k-1} \quad \text{(Type I)} \\  2^{n-k} \quad \quad \!\text{(Type II)}\,.
	\end{cases}
\end{align}

\end{enumerate}

\begin{remark}
Note that of the original LP constraints~\eqref{eq:lp},
Eq.~\eqref{eq:sdpgamma} 
only contains non-negativity of $A_j \geq 0$ and the Knill-Laflamme conditions,
but not
$KB_i-A_i\geq 0$ and $S_i\geq 0$ for all $i$.
While they can, in principle, be added via the diagonal elements of $\Gamma$, they did not make any difference for the codes in our table.
\end{remark}

\begin{remark}\label{rmk:tr} Ref.~\cite{munne2025sdpboundsquantumcodes} lists 
$\tr(\Gamma)=2^n/K$ as additional SDP constraint. 
However, one can check that this constraint is redundant, as it can be obtained by combining
\begin{align}
\Gamma_{0b}=\Gamma_{bb}\,, \quad \sum^N_{b=0}\Gamma_{0b}=\frac{2^n}{K}\Gamma_{00}\,, \quad \Gamma_{00}=1\,.
\end{align}
\end{remark}

\subsection{Lovász SDP for self-dual codes}\label{sec:lov}
A relaxation of the main SDP
can be done in terms of the Lovasz number
for self-dual codes ($K=1$). 
This drops or modifies the last three conditions of Eq.~\eqref{eq:sdpgamma}:
a) the group structure imposed on $\Gamma$ is relaxed to a unitary and anti-commutativity structure, resulting in $\Gamma_{aa} = \Gamma_{a0}$ and $\Gamma_{ab} = 0$ if $E_a E_b = -E_b E_a$;
the constraint of the 
b) Knill-Laflamme conditions~\eqref{eq:klred} 
for the case of pure codes [Eq.~\eqref{eq:selfdual}]
are
applied;
c) the off-diagonal projector constraints are dropped.

In what follows, we consider simple graphs $G = (V,E)$ only, i.e. unweighted, undirected graphs without loops.
The size of the largest subset of vertices in $G$ in which no pair of vertices are adjacent is the independence number $\alpha(G)$.
In the context of coding theory, a classical code can be interpreted as a graph $G$ whose independence number $\alpha(G)$ equals the size of the code~\cite{1055985}.
Although computing this quantity is NP-complete~\cite{DBLP:books/fm/GareyJ79}, 
it is known that $\alpha(G) \leq \vartheta(G)$, where $\vartheta(G)$ is the Lovász number of a graph~\cite{1055985}.
Thus, the Lovász number can be used to upper bound the size of a classical code. 
It is known that $\vartheta$ can be computed in polynomial time via an SDP, 
and the symmetry-reduction of a slight variation of it 
leads to the classical Delsarte bound based on linear programming~\cite{1056072}.

A quantum analogue was derived in Ref.~\cite{munne2025sdpboundsquantumcodes} for self-dual quantum codes via the formulation in Eq.~\eqref{eq:sdpgamma}.
To see this, consider a self-dual quantum code
and 
define a graph $G$ whose vertices $V$ are those Pauli strings whose weights are larger or equal to the code distance, i.e. all $E_a\in V$ for which $\wt(E_a)\geq \delta$.
Let two vertices $a$ and $b$ be connected, written $a \sim b$, if $E_a,E_b \in V$ satisfies
$E_a E^\dag_b = -E_b E^\dag_a$ or $0<\wt(E_a E^\dag_b)<\delta$.
The Lovász theta number can be defined in terms of an SDP as~\cite{1055985,GALLI2017159},
\begin{align}\label{eq:lovasz_SDP2}
    \vartheta(G)=\max_M \quad & \tr(M) \nn \\
    \text{subject to} \quad & M_{aa}=m_a \quad\,\, \text{for all}\quad a \in V \,,  \nn\\
    & M_{ab}=0  \quad \,\,\quad\text{if}\quad a\sim b ,\nn \\
    & \Delta = \begin{pmatrix} 1 & m^T\\ m & M \end{pmatrix} \succeq 0 \,.
\end{align}

Ref.~\cite{munne2025sdpboundsquantumcodes} showed that if a self-dual quantum code with parameters $(\!(n,1,\delta)\!)_2$ exists, then the Lovász number of the graph $G$ satisfies $2^n \leq 1+\vartheta(G)$.
This is seen by noting that the structure of Eq.~\eqref{eq:lovasz_SDP2} is obtained by relaxing some conditions of Eq.~\eqref{eq:sdpgamma} in the case of self-dual codes. 
In particular, the matrix $\Delta$ is equivalent to the relaxed version of $\Gamma$, where all rows and columns corresponding to the elements with weight $0 < \wt(E_a) < \delta$ are removed.
That these rows and columns can be removed follows from $\Gamma \succeq 0$ together with Eq.~\eqref{eq:selfdual}. 
By Remark~\ref{rmk:tr}, self-dual codes satisfy $\tr(\Gamma)=2^n$ and thus, $\tr(\Delta)=\tr(M)+1$ must be at least $2^n$. 
Therefore, 
while the Lovász number bounds the {\em size} of classical codes, 
it also bounds the {\em existence} of self-dual quantum codes.

\subsection{Symmetry-reduced SDP}
For numerical purposes, the SDP~\eqref{eq:sdpx_relax} scales in manner 
far beyond current solver capabilities
(for details see Remark~\ref{rem:scaling}).
We thus use a symmetry-reduced version based on the non-binary Terwilliger algebra~\cite{GIJSWIJT20061719}.
We explain how this reduction works.
Recall that the quantum weight enumerators are obtained by averaging over all Pauli strings of a given weight [see Eq.~\eqref{eq:enum}].
In particular, 
the non-negative function $\tr(E\Pi)\tr(E^\dagger \Pi)$ is averaged over all Pauli strings of constant weight to obtain the 
$A$-enumerator.
In a similar spirit we average the matrix $\Gamma$ 
while keeping its positive semidefiniteness.

Denote by $s(E)$ the support of a Pauli string, that is, the subsystems it acts non-trivially on.
Denote by $\wt(E) = |s(E)|$ its weight. 
Now average the matrix $\Gamma$ as follows,
\begin{align}\label{eq:lambda_anycode}
    \lambda^{t,p}_{i,j}(\varrho)= \!\!\!\! \sum\limits_{
        \substack{E_a,E_b\in \E_n \\|s(E^\dagger_a)| \,=\, i
            \\ |s(E_b)| \,= \,j 
            \\ \lvert s(E^\dagger_a)\cap s(E_b)\rvert \,= \,t
            \\ \lvert s(E^\dagger_aE_b)\rvert \, = \, i+j-t-p \\}} \!\!\!\!\!\! \Gamma_{ab}(\varrho)  \,.
\end{align}
We call $\lambda^{t,p}_{i,j}(\varrho)$ the \emph{matrix weights}, and it can be checked that
\begin{equation}\label{eq:Alambda}
A_i(\varrho)=\lambda^{0,0}_{i,0}(\varrho)=\lambda^{i,i}_{i,i}(\varrho)
\end{equation} for all $i$.
Furthermore, $\lambda^{t,p}_{i,j}(\varrho)=\lambda^{t,p}_{j,i}(\varrho)$ since $\Gamma_{ab}(\varrho)=\Gamma_{ba}(\varrho)$.
Note that the range of valid $(i,j,t,p)$ for any $E_a, E_b\in \E_n$ is restricted to the set
\begin{equation}\label{eq:I-range}
 \II(n)= \{ (i, j, t, p)\,:\, 0 \leq p \leq t \leq i,j \quad \text{and} \quad i+j \leq t+n\, \}\,.
\end{equation}
Following the symmetry-reduction based on the Terwilliger algebra~\cite{GIJSWIJT20061719}, one has
\begin{equation} \label{eq:psd}
	\Gamma(\varrho) \succeq 0 \quad  \Longrightarrow \quad \bigoplus_{\substack{a,k \in \N_0\\ 0\leq a\leq k\leq n+a-k}}
	\left(\sum\limits_{\substack{t,p\in \N_0 \\ 0 \leq p \leq t \leq i,j \\ i+j\leq t+n}} \alpha(i,j,t,p,a,k)x_{i,j}^{t,p}(\varrho)\right)_{i,j=k}^{n+a-k} \succeq 0\,,
\end{equation}
where $x^{t,p}_{i,j}(\varrho)=\lambda^{t,p}_{i,j}(\varrho)/\gamma^{t,p}_{i,j}$ with  $\gamma^{t,p}_{i,j}$ and $\alpha(i,j,t,p,a,k)$ defined in Eq.~\eqref{eq:gamma} and \eqref{eq:alpha} respectively.
From constraints on 
$\Gamma(\varrho)$
one obtains constraints on $x^{t,p}_{i,j}(\varrho)$
via
Eq.~\eqref{eq:lambda_anycode}.
Then the SDP of Eq.~\eqref{eq:sdpgamma} 
can be relaxed to~\cite{munne2025sdpboundsquantumcodes},
\begin{align}\label{eq:sdpx_relax}
	\textnormal{find} 		\quad & x^{t,p}_{i,j}  \nn \\
	\textnormal{subject to} 	\quad
	&  x^{0,0}_{0,0}=1, \nn \\
    &x^{t,p}_{i,j}=0 \quad\quad\quad \text{if}\quad  t-p \quad  \text{is odd}\,,\nn \\
	& x^{t,p}_{i,j}=x^{t',p'}_{i',j'}\,\,\,\, \quad \text{if} \quad  t-p=t'-p' \quad \text{is even} \quad \text{and} \nn \\
	&\quad\quad\quad\quad\quad\quad\,\, (i,j,i+j-t-p) \quad\text{a permutation of} \quad(i',j',i'+j'-t'-p')\,, \nn \\
	& \sum_{\substack{(i,j,t,p)\in \II(n) \\ k=i+j-t-p}}  \gamma^{t,p}_{i,j} x^{t,p}_{i,j}=\frac{2^n}{K} \gamma^{0,0}_{k,0} x^{0,0}_{k,0}   \,, \nn\\
	& K2^{-n}\sum^n_{i=0} K_j(i,n) \gamma^{0,0}_{i,0} x^{0,0}_{i,0}\quad = \quad \gamma^{0,0}_{j,0} x^{0,0}_{j,0} \quad\quad \text{for} \quad 0<j<\delta \,, \nn \\
	&\bigoplus_{\substack{a,k \in \N_0\\ 0\leq a\leq k\leq n+a-k}} \left(\sum\limits_{\substack{t,p\in \N_0 \\ 0 \leq p \leq t \leq i,j \\ i+j\leq t+n}}\alpha(i,j,t,p,a,k)x_{i,j}^{t,p}\right)_{i,j=k}^{n+a-k}   \succeq 0\,,
\end{align}

\begin{remark}  \label{rem:scaling}  
The symmetry-reduction of the non-binary Terwilliger algebra for a $n^q\times n^q$ matrices 
(containing originally 
$O(n^{2q})$ real variables)
results in 
$O(n^4)$ real variables~\cite{GIJSWIJT20061719}.
In our case $q=4$, leading to a substantive reduction in the size of the resulting SDP.
\end{remark}

In contrast to the original formulation in Ref.~\cite[Theorem 22]{munne2025sdpboundsquantumcodes}, Eq.~\eqref{eq:sdpx_relax} does not include the constraint $\sum^n_{i=0}\gamma^{0,0}_{i,0}x^{0,0}_{i,0}=2^n/K$.
This is because this constraint is also redundant in the symmetry-reduced SDP (see Remark~\ref{rmk:tr}).

Additional constraints can be added to restrict the SDP~\eqref{eq:sdpx_relax} further. 
Writing $\lambda^{t,p}_{i,j}= x^{t,p}_{i,j}\gamma^{t,p}_{i,j}$, these constraints reads

\begin{enumerate} [label=(\alph*)]\setlength{\itemsep}{3pt}
\item {\bf Pure codes.} 
For all $x^{t,p}_{i,j}$,
\begin{align}\label{eq:klred_pure_xijtp}
x^{t,p}_{i,j}=0 \quad\quad \text{if} \quad\quad \{i,j,i+j-t-p\} \cap \{1,\cdots, \delta-1\} \neq \emptyset\,.
\end{align}
\item {\bf Self-dual codes.} Self-dual codes are pure, and thus the condition in (a) applies. Additionally
\begin{align}\label{eq:seldualred}
 \sum^n_{i=0} (-1)^i K_j(i,n) \lambda^{i,i}_{i,i} &=0 \quad\quad \quad  \text{if} \quad  n-j \quad \text{is odd}\,, \nn\\
\frac{K}{2^n}\sum^n_{i=0} K_j(i,n) \lambda^{i,i}_{i,i}
 &= 
	\lambda^{j,j}_{j,j}  \quad \quad \text{for} \quad 0\leq j \leq n \,,
\end{align}
\item {\bf Additive codes}. For all $x^{t,p}_{i,j}$,
\begin{align}\label{eq:add_sym_red}
x^{t,p}_{i,j} \geq 0 \quad\quad \text{and}  \quad\quad \lambda^{t,p}_{i,j}\in \mathbb{Z}\,.
\end{align}
For type I/II additive codes, one can further impose
\begin{align}\label{eq:add_type_symred}
	\sum_{j\,\text{even}} \lambda^{j,j}_{j,j} =
	\begin{cases}
2^{n-k-1} \quad \text{(Type I)} \\  2^{n-k} \quad \quad \!\text{(Type II)}\,.
	\end{cases}
\end{align}

\end{enumerate}

\begin{remark}\label{rmk:extraconst}
Additional constraints on 
$x^{0,0}_{i,0}\gamma^{0,0}_{i,0}$
could be added to the SDP~\eqref{eq:sdpx_relax} that arise from the linear programming bound, namely
$KB_i-A_i\geq 0$, 
and 
$S_i\geq 0$ for all $i$. 
Also, further SDP constraints 
(see Ref.~\cite[Theorem 22]{munne2025sdpboundsquantumcodes})
could be added to Eq.~\eqref{eq:sdpx_relax}.
For the numerical results in Section~\ref{sec:cert}, 
these additional LP and SDP constraints did not make a difference in the bounds obtained.
\end{remark}

\begin{example}
Consider the matrix $\Gamma(\varrho)$ of the five-qubit code $[\![5,1,3]\!]_2$ as defined in Example~\ref{ex:enum} and with $\varrho=\Pi/2$.
Its matrix weight enumerator $\lambda^{t,p}_{i,j}$ with $(i,j,t,p)$ in
\begin{equation}
 \II(5)= \{ (i, j, t, p)\,:\, 0 \leq p \leq t \leq i,j \quad \text{and} \quad i+j \leq t+5\, \}\,,
\end{equation}
is then given by
\begin{align}
\lambda^{0,0}_{0,0}(\varrho)= 1\,,\quad 
\lambda^{0,0}_{4,0}(\varrho) = \lambda^{4,4}_{4,4} (\varrho)= 15\,, \quad \lambda^{3,1}_{4,4}(\varrho)= 180\,, \quad
\lambda^{4,0}_{4,4} (\varrho)= 30\,,
\end{align}
with all other terms vanishing,
$\lambda^{0,0}_{1,0} = \lambda^{0,0}_{1,1} =\dots = 0$.
\end{example}

We have the following open problem regarding a putative additive code for which a possible integral weight enumerator satisfying all LP constraints is known, but for which we do not know whether also integral {\it matrix weights} exist.
\begin{problem}
Consider the hypothetical $[\![24,0,10]\!]_2$ stabilizer code. A putative weight enumerator is~\footnote{
This weight distribution can also be found in the Online Encyclopedia of Integer Sequences \href{A030331}{http://oeis.org/A030331}}
\begin{align}
&(A_{10}, A_{12}, A_{14}, \cdots,A_{24})  \nn\\ 
&\quad= 
(18216, 156492, 1147608, 3736557, 6248088, 4399164, 1038312, 32778)\,.
\end{align}
The problem asks to provide a putative integer matrix weight enumerator $\lambda^{t,p}_{i,j}$ for this code.
\end{problem}

\section{Code bounds}
\label{sec:cert}
\subsection{Methods}
To certify the infeasibility of a primal problem, 
the dual of an SDP can be used:
Consider primal minimization and dual maximization problems $p$ and $d$.
By weak duality, $p \geq d$ for any pair of primal and dual feasible objective values~\cite{boyd2004convex}. 
For testing primal feasibility problem one sets $p=0$,
and as a consequence, a strictly positive dual solution 
$d > 0$ shows primal infeasibility.
In our case, a solution of Eq.~\eqref{eq:sdpx_relax} with positive objective value implies that a quantum code 
with parameters $(\!(n,K,\delta)\!)_2$ does not exist.

To obtain reliable infeasibility certificates, we employ the low-rank SDP solver by Cohn, Leijenhorst, and de Laat~\cite{cohn2024optimalitysphericalcodesexact, Leijenhorst_2024}, online available at \url{https://github.com/nanleij/ClusteredLowRankSolver.jl}.
Originally developed to compute bounds on spherical codes, this high-precision primal-dual interior-point solver can exploit both low-rank structures and the clustering of constraints in semidefinite programs.
The solver comes with a heuristic rounding method for extracting solutions over rational or algebraic numbers.

To verify positive semidefiniteness of the obtained solutions, 
we rely on matrix diagonalization\footnote{
Another method relies on the characteristic polynomial: 
a symmetric $n\times n$ 
matrix $A$ is positive semidefinite, 
if and only if 
the coefficients of its characteristic polynomial
$p(\lambda) = \det(\lambda I -A) = \lambda^n + p_{n-1}\lambda^{n-1} + 
\dots + p_0$ alternate in sign. That is,
$(-1)^{n-i}p_i\geq 0$ for all $1\leq i\leq n$.}~\cite{PEYRL2008269}. 
The matrix $A = LDL^T$ is positive semidefinite, 
if and only if the diagonal matrix $D$ contains only non-negative entries.
Note that the $LDL^T$ decomposition relies 
on basic arithmetic operations only.
It can thus be computed exactly 
in algebraic extensions of rational fields.
The symmetry-reduced SDP in Eq.~\eqref{eq:sdpx_relax} involves the constants $\alpha(i,j,t,p,a,k)$, 
whose expansion contains powers of $\sqrt{q-1}$. 
In our case $q=4$ and we perform an $LDL^T$ decomposition 
over $\Q(\sqrt{3})$.
The elements of $D$ are thus of form $a+bz \in \Q(\sqrt{3})$ where $a,b\in \Q$ and $z=\sqrt{3}$.
We check the non-negativity of these elements can be done by approximating $\sqrt{3}$ by two rational numbers such that $r_\ell < \sqrt{3} < r_u$ with $r_u,r_\ell \in \mathbb{Q}$. 
Then,
\begin{align}
    a+br_\ell \geq 0 \quad \text{implies} \quad a+b\sqrt{3} \geq 0 \quad \text{for}\quad 
    \tfrac{b}{a}
    \geq 0\,, \nn\\
    a+br_u \geq 0 \quad \text{implies} \quad a+b\sqrt{3} \geq 0 \quad \text{for}\quad 
    \tfrac{b}{a}\leq 0\,, \nn\\
    b \geq 0 \quad\, \text{implies} \quad a+b\sqrt{3} \geq 0 \quad \text{for}\quad 
    a=0 \,.
\end{align}

In the following, we solve the dual of four variants of the symmetry-reduced SDP. All dual programs can be found in Appendix~\ref{app:dual}.
\begin{enumerate}[label=(\alph*)]
\setlength{\itemsep}{3pt}

\item For {\bf general codes} with $K>1$, we use SDP~\eqref{eq:sdpx_relax} whose dual is SDP~\eqref{eq:dual_solall}.

\item 
For {\bf pure codes} with $K>1$, we use SDP~\eqref{eq:sdpx_relax}
with the additional constraints of
Eq.~\eqref{eq:klred_pure_xijtp}.
The corresponding dual SDP is Eq.~\eqref{eq:dual_solall} 
with the modifications described in 
Proposition~\ref{rmk:dualpure}.

\item For {\bf self-dual codes} (pure and $K=1$), we relax the SDP for pure codes as shown in Appendix~\ref{app:selfdual}. Its dual is Eq.~\eqref{eq:dual_sol}. For the range of codes studied in this work, $n\leq 19$, this relaxation gives the same bounds as the pure code SDP, 
but it is easier to compute.

\item  For {\bf additive codes} with $K>1$, we use SDP~\eqref{eq:sdpx_relax} with addition  additive constraints in Eq.~\eqref{eq:add_sym_red} and
~\eqref{eq:add_type_symred}, with the exception of $\lambda^{t,p}_{i,j} = x^{t,p}_{i,j}\gamma^{t,p}_{i,j}\in \mathbb{Z}$ . 
Its dual is SDP ~\eqref{eq:dual_solall} 
with the modifications described in 
Proposition~\ref{rmk:additive codes}.
In the parameter range considered,
($n \leq 19$),
these constraints did not lead to improvements on bounds for additive codes.

\end{enumerate}

The four dual SDPs could further be constrained by applying additional constraints that hold for all quantum codes, 
namely  
$KB_i \geq A_i$ and $S_i \geq 0$,
and the additional SDP variable from Ref.~\cite{munne2025sdpboundsquantumcodes}[Theorem 22].
However, including these extra conditions
did not lead to stronger bounds for the parameter range ($n \leq 19$) considered in this work.

All infeasibility certificates can be found at \url{https://github.com/ganglesmunne/SDP_bounds_on_quantum_codes_rational_certificates}.

\begin{table}[tbp]\label{tab:results}
\vspace{0.3cm}
{\footnotesize
\begin{tabular}{@{} c r r r r r r r @{}}
\toprule 
n $\backslash$ $\delta$ & 2 & 3 & 4 & 5 & 6 & 7 & 8 \\ 
\midrule
6 & 16 & $2^\alpha$ & 1 & 0 & 0 & 0 & 0 \\
7 & 24 - 26 & 2 - 3 & 0 & 0 & 0 & 0 & 0 \\
8 & 64 & \textbf{8}(9) & 1 & 0 & 0 & 0 & 0 \\
9 & 100 - 112 & 12 - 13& 1 & 0 & 0 & 0 & 0  \\
10 & 256 & 24 & \textbf{4}(5) & 0 & 0 & 0 & 0 \\
11 & 416 - 460 & 32 - $\bm{42}^\alpha$(53) & 4 - 7 & 2 & 0 & 0 & 0 \\
12 & 1024 & 64 - 89 & 16 - 20 & 2 & 1 & 0 & 0\\
13 & 1586 - 1877 & 128 - 204 & 20 - 40 & 2 - 3 & \textbf{0}(1) & 0 & 0\\
14 & 4096 & 256 - $\bm{295}^\alpha$(324) & 64 - 102 & 4 - $\bm{9}^\alpha$(10) & 1 & 0 & 0\\ 
15 & 6476 - 7606 & 512 - 580 & 64 - $\bm{138}^\alpha$(150) & 8 - 18 & 1 & 0 & 0\\
16 & 16384 & 1024 - \textbf{1170}(1260) & 128 - 256 & 32 - 42 & 4 - 6 & 0 & 0 \\
17 & 26333 - 30720 & 2048 - 2145 & 512 & 32 - $\bm{59}^\alpha$(71) & 4 - 8 & 2 & 0\\
18 & 65536 & 2048 - 4096 & 512 - \textbf{921}(986) & 64 - 113 & 16 - 22 & 2 & 1 \\
19 & 106762 - 123790 & 4096 - $\bm{7420}^\alpha$(8426) & 1024 - 1804 & 128 - \textbf{249}(276) & 32 - 47 & 2 - 3 & \textbf{0}(1) 
\\
\bottomrule
\end{tabular}
}
\vspace{1cm}
 \caption{{\bf Maximum size $K$ of quantum codes}
given $n$ and $\delta$.
Lower and upper bounds are separated by a dash $-$. 
Upper and lower bounds meet
if only one number is shown.
A $K=0$ implies that no corresponding code exists.
When our SDP improves a known bound, the stronger SDP bound is highlighted in bold with the previous bound in parentheses. Codes labeled by $\alpha$ must be impure; stronger bounds for pure codes are provided in Eq.~\eqref{eq:pure_inf}.}
\end{table} 

\subsection{Table}

Table~\ref{tab:results} shows 
upper and lower bounds on the 
maximum size $K$ of non-additive quantum codes given $n$ and $\delta$
All upper bounds are obtained via a rational certificate.
Lower and upper bounds are separated by a dash $-$. 
Upper and lower bounds meet
if only one number is shown.
A $K=0$ implies that no corresponding code exists.
If our SDP improves a known upper bound, the previous bound is shown in parentheses, with the SDP bound highlighted in bold.
Codes labeled by $\alpha$ must be impure; stronger bounds for pure codes are provided in Eq.~\eqref{eq:pure_inf}.

Our SDP bounds are compared to previously known bounds.
For 
$6\leq n \leq 15$ and 
$2\leq \delta \leq  5$, the known upper and lower bounds are obtained from combining the following results: 
\begin{itemize}\setlength{\itemsep}{3pt}
    \item[(i)] Ref.~\cite{munne2025sdpboundsquantumcodes} showed the non-existence of $(\!(8,9,3)\!)_2$ and $(\!(10,5,4)\!)_2$ numerically via SDP, 
    for which we now also provide exact certificates.
    \item[(ii)]     Ref.~\cite{Rigby_2019} found the following non-additive codes $(\!(9,100,2)\!)_2$, $(\!(11,416,2)\!)_2$, and $(\!(13,20,4)\!)_2$ serving as lower bounds.
    \item[(iii)]     The remaining bounds are from
    ~\cite[Table III]{Rigby_2019}.
\end{itemize} 

\smallskip
Known Bounds for $n\geq 16$ are obtained from:
\begin{itemize}\setlength{\itemsep}{3pt}
    \item[(i)] For $\delta=2$
    and $n=16,17,18,19$,
    lower bounds 
 are given by the family of non-additive
$(\!(4k +2l +3, M_{k,l}, 2)\!)_2$
 codes 
 of Ref.~\cite{PhysRevLett.99.130505},
where 
$M_{k,l} \approx 2^{n-2}(1-\sqrt{2/(\pi(n - 1))})$.
 
 \item[(ii)] For $\delta>2$, lower bounds are from Ref.~\cite{table}.
 
 \item[(iii)] Upper bounds are computed using the LP of Eq.~\eqref{eq:lp}.
\end{itemize}
Bounds with $n\leq6$ are not shown since they are tight.
In contrast to the online tables by Ref.~\cite{table}, we also list non-additive codes and arrange columns and rows by $n\backslash\delta$ instead of $n\backslash k$ with $k=\log_2(K)$. 

\bigskip
We list comments on specific cases below:

\smallskip
{\bf Improvements upon certificates:}
\begin{enumerate}
\setlength{\itemsep}{3pt}
\item[(a)]
High-precision infeasibility certificates for 
$(\!(7,1,4)\!)_2$,
$(\!(8,9,3)\!)_2$,
and 
$(\!(10,5,4)\!)_2$ codes
were provided in Ref.~\cite[Appendix B]{munne2025sdpboundsquantumcodes}. 
We now provide exact certificates.
Note that the non-existence of a $(\!(7,1,4)\!)_2$ code was already known analytically~\cite{PhysRevLett.118.200502}.

\end{enumerate}

\smallskip
{\bf Additive/non-additive codes:}
\begin{enumerate}\setlength{\itemsep}{3pt}

\item[(b)] 
Additive $[\![13,0,6]\!]_2$ and $[\![19,0,8]\!]_2$ codes do not exist~\cite{681315}. 
We also show that the corresponding 
{\it non-additive} codes
$(\!(13,1,6)\!)_2$ and $(\!(19,1,8)\!)_2$
do not exist.

\item[(c)] 
An additive $[\![17, 6, 5]\!]_2$ does not exist due to the improved upper bound $(\!(17,59,5)\!)_2$.
It is known from the LP bounds that a 
$[\![19, 2^8, 5]\!]$ stabilizer code does not exist. 
We can now show that the same result (already anticipated
by Calderbank et al.~\cite{681315}~\footnote{
"However, we will be surprised if a $(\!(19, 2^8, 5)\!)$ non-additive code exists."~\cite{681315}})
holds for non-additive codes: 
a $(\!(19, 2^8, 5)\!)_2$ does not exist due to 
the improved upper bound $(\!(19, 249, 5)\!)_2$. 

\end{enumerate}

\smallskip
{\bf Pure codes:}
\begin{enumerate}\setlength{\itemsep}{3pt}
\item[(d)] For pure codes the bounds strengthen to
\begin{align}\label{eq:pure_inf}
 &(\!(6,1,3)\!)_2 \,,  \quad  (\!(11,41,3)\!)_2 \,,  \quad  (\!(14,290,3)\!)_2 \,,  \nn \\
&(\!(14,8,5)\!)_2 \,,   \quad  (\! (15,135,4)\!)_2\,, \quad (\!(17,57,5)\!)_2 \,, \quad (\!(19,7314,3)\!)_2 \,.
\end{align}
\end{enumerate}

\begin{enumerate}
\item[(e)]
Additive $[\![6,1,3]\!]_2$ codes must be impure~\cite{681315}.
Our SDP extends this result, showing that the corresponding non-additive $(\!(6,2,3)\!)_2$ codes must also be impure.

\item[(f)] Ref.~\cite{681315} proved analytically that pure additive $[\![12,1,5]\!]_2$ codes do not exist.
While our SDP cannot show this,
one can hope that a mixed-integer SDP solver could recover this result:
for additive codes the SDP is further restricted by $\lambda^{t,p}_{i,j} = x^{t,p}_{i,j}\gamma^{t,p}_{i,j}\in \mathbb{Z}$ [see Eq.~\eqref{eq:add_sym_red}].
\end{enumerate}

\bibliographystyle{alpha}
\bibliography{current_bib}

\appendix
\section{Terwiliger Algebra factors}\label{app:Terw}

In the block-diagonalization of the Terwilliger algebra, one has the following normalization factors~\cite{GIJSWIJT20061719}:
\begin{align}\label{eq:gamma}
	\gamma_{i,j}^{t,p}
	&= (q-1)^{i+j-t}(q-2)^{t-p}\binom{n}{a_1,\dots,a_r}\frac{n!}{a_1!\dots a_r!(n-\sum^r_{\ell=1} a_\ell)!}\,, 
    \\\nn\\\nn
	\alpha(i,j,t,p,a,k) &= \beta_{i-a,j-a,k-a}^{n-a,t-a}\left(q-1\right)^{\frac{1}{2}\left(i+j\right)-t} \sum\limits_{g=0}^{p}\left(-1\right)^{a-g}\binom{a}{g}\binom{t-a}{p-g}\left(q-2\right)^{t-a-p+g}\,,\label{eq:alpha}\\
	\beta_{i,j,k}^{m,t}&=\sum\limits_{u=0}^{m}\left(-1\right)^{t-u}\binom{u}{t}\binom{m-2k}{m-k-u}\binom{m-k-u}{i-u}\binom{m-k-u}{j-u}\,.
\end{align}

\section{Dual programs}\label{app:dual}

Here we show the three dual SDPs used to obtain the infeasibility certificates for general, pure, and self-dual quantum codes.
These certificates are in~\url{https://github.com/ganglesmunne/SDP_bounds_on_quantum_codes_rational_certificates} and prove the new upper bounds in the maximum size of a quantum code shown in Section~\ref{sec:cert}.

\subsection{General codes}
The dual of SDP~\eqref{eq:sdpx_relax} is~\cite{munne2025sdpboundsquantumcodes},
\begin{align}\label{eq:dual_solall}
	\alpha \quad = \quad \max_{Y^{(a,k)},\, C_i} \quad &   (KD_0-C_0)  - {y^{0,0}_{0,0}} \quad &  \nn \\
	\text{subject to}  \quad & Y^{(a,k)} \succeq0,\nn \\
	& y^{t,p}_{i,j}\ = \frac{1}{\gamma^{t,p}_{i,j}} \sum^{\min(i,j)}_{k=0} \sum^k_{a=\max(i,j)+k-n} \alpha(i,j,t,p,a,k) Y^{(a,k)}_{i-k,j-k}\,, \nn \\
	&D_i = 2^{-n} \sum^{\delta-1}_{j=0} K_j(i,n)  C_{j} \,, \nn \\
	&Q_i=\frac{1}{\left(\frac{2^n}{K}-2\right)}   (2y^{0,0}_{i,0} + y^{i,i}_{i,i} -KD_i + C_i)
	\quad\quad\quad\quad\quad\hspace{-0.03cm} \text{for} \quad 0 <i <\delta \,,\nn \\
	& Q_i=\frac{1}{\left(\frac{2^n}{K}-2\right)}   (2y^{0,0}_{i,0} + y^{i,i}_{i,i} -KD_i)
	\quad\quad\quad\quad\quad\quad\quad\, \text{for} \quad \delta \leq i \leq n \,, \nn \\
	& \sum_{\substack{ (i',j',t',p')\in \II(4,n) \\ (t-p)=(t'-p') \\ (i',j',i'+j'-t'-p') \, \text{a} \\
			\, \text{permutation of} \,  (i,j,i+j-t-p) \\ }} \!\!\!\!\!\!\!\!\!\!  
	\left(y^{t',p'}_{i',j'}+ Q_{i'+j'-t'-p'}\right) = 0 \quad  \hspace{2.3em} \text{if} \quad (t-p) \quad \text{is even}  \nn \\ &  \vspace{-10cm}\hspace{6cm}
    \quad\quad\quad\quad \text{and} \quad i,j,i+j-t-p \neq 0 \,.
\end{align}
Here $Y^{(a,k)}$ are real square matrices of size $(n+a-2k)$ with $0\leq a\leq k\leq n+a-k$.
Due to Remark~\ref{rmk:tr}, Eq.~\eqref{eq:sdpx_relax} contains one constraint less,
namely $\sum^{n}_{i=0}\gamma^{0,0}_{i,0} x^{0,0}_{i,0}=2^n/K$,
than 
the original SDP formulation in Ref.~\cite{munne2025sdpboundsquantumcodes}
since it is redundant. 
From SDP duality, every primal constraint corresponds to a dual variable and vice versa.
The redundant primal constraint that is removed in Eq.~\eqref{eq:sdpx_relax} leads to the removal of the dual variable $w$ in Ref.~\cite[Proposition 26]{munne2025sdpboundsquantumcodes}.
This leads to the SDP in Eq.~\eqref{eq:dual_solall}.

\subsection{Pure codes}
For pure codes the SDP~\eqref{eq:dual_solall} is modified.
\begin{proposition}\label{rmk:dualpure}
    For pure codes 
    the dual of SDP 
~\eqref{eq:sdpx_relax} 
with the additional constraint
Eq.~\eqref{eq:klred_pure_xijtp} 
consists of SDP~\eqref{eq:dual_solall}
with the following modifications:

\begin{itemize}
    \item[(i)] The variable $Q_i$ is indexed by $\delta \leq i \leq n$ 
    instead of $0 \leq i \leq n$.
    \item[(ii)] Last constraint in SDP~\eqref{eq:dual_solall} 
    is only applied when $(i,j,t,p)$ satisfies that $(t-p)$ is even and $i\geq \delta$ and $j\geq \delta$ and $i+j-t-p\geq \delta$.
\end{itemize}

\end{proposition}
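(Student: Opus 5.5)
The plan is to redo the Lagrangian duality that produced SDP~\eqref{eq:dual_solall} from SDP~\eqref{eq:sdpx_relax}, now with the extra linear equalities~\eqref{eq:klred_pure_xijtp}. The cleanest route is not to add new multipliers for these equalities. Instead, we substitute $x^{t,p}_{i,j}=0$ for every index with $\{i,j,i+j-t-p\}\cap\{1,\dots,\delta-1\}\neq\emptyset$, which removes those variables from the primal. The pure-code primal then becomes SDP~\eqref{eq:sdpx_relax} over the reduced variable set. These are the variables with every entry of $(i,j,i+j-t-p)$ either $0$ or $\geq\delta$.

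The Lagrangian is built with the same dual variables as before. These are the block matrices $Y^{(a,k)}\succeq 0$ pairing with the PSD constraint, the multipliers $C_j$ for the Knill–Laflamme equalities, the multiplier for $x^{0,0}_{0,0}=1$, and the multipliers (rescaled to $Q_k$) for the projector row-sum constraints $\sum \gamma^{t,p}_{i,j}x^{t,p}_{i,j}=\frac{2^n}{K}\gamma^{0,0}_{k,0}x^{0,0}_{k,0}$. Stationarity in each remaining free variable gives one equality constraint of the dual, namely the last line of~\eqref{eq:dual_solall}. Since no equation is produced for a deleted variable, the dual constraints are exactly those orbit classes whose entries $i$, $j$, $i+j-t-p$ are all $\geq\delta$. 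The classes containing a zero entry are the ones absorbed into the definitions of $Q_i$ and the objective, as in the general case. This yields modification (ii).

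For (i), we look at the projector constraints indexed by $k$. The row-sum constraint with $k=i+j-t-p$ involves $x^{0,0}_{k,0}$ on its right-hand side. For $0<k<\delta$ every term on both sides is a deleted variable, so the constraint reads $0=0$ and drops out, and with it its multiplier $Q_k$. Hence $Q_i$ is needed only for $\delta\le i\le n$ (plus the index $0$ handled through the objective, as before). A side effect is that the definition of $Q_i$ for $0<i<\delta$, the only place where $C_i$ appears outside $D_i$, disappears. The $C_j$ then enter only through $D_0$ and the $D_i$ with $i\geq\delta$, which is consistent with the statement leaving the rest unchanged.

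The main obstacle is bookkeeping. We must check carefully that the coefficients $\alpha(i,j,t,p,a,k)/\gamma^{t,p}_{i,j}$ defining $y^{t,p}_{i,j}$ and the orbit identifications (permutations of $(i,j,i+j-t-p)$) still close up on the reduced index set. Concretely, the permutation equivalence must preserve the property ``all entries $\geq\delta$'', which it does since it only permutes the triple. We must also check that the terms with an entry $0$, i.e.\ the $A$-type variables $x^{0,0}_{i,0}$ and $x^{i,i}_{i,i}$ with $i\geq\delta$, enter $Q_i$ in the same form as before. Weak duality is then immediate: any feasible dual point with positive objective certifies infeasibility of the pure primal, by the same argument as for~\eqref{eq:dual_solall}.
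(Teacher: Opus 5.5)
Your proposal is correct and follows the same route as the paper. Setting the primal variables $x^{t,p}_{i,j}$ with an entry of $(i,j,i+j-t-p)$ in $\{1,\dots,\delta-1\}$ to zero removes them, so their dual stationarity constraints disappear, including the orbit conditions that define $Q_i$ for $0<i<\delta$. Your additional check that the projector constraints with $0<k<\delta$ collapse to $0=0$ is a helpful detail that the paper's short proof leaves implicit.
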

\begin{proof}
For pure codes the conditions of Eq.~\eqref{eq:klred_pure_xijtp} can be equivalently written as: $x^{t,p}_{i,j} = 0$ if $0<i<\delta$ or $0<j<\delta$ or $0<i+j-t-p<\delta$. 
By SDP duality, each primal variable corresponds to a dual constraint.
Thus, all constraints from the dual corresponding to the removed primal variables can be omitted. 
In our case, (i) the last equality of Eq.~\eqref{eq:dual_solall} is only applied if $(t-p)$ is even, $i\geq \delta$, $j\geq \delta$, and $i+j-t-p\geq \delta$.
Furthermore, (ii) $Q_i$ is now only defined when $\delta \leq i \leq n$.  This ends the proof.
\end{proof}

\begin{proposition}
For pure codes, the SDP variables of Eq.~\eqref{eq:sdpx_relax} can be simplified 
by removing all rows and columns containing $x^{t,p}_{i,j}$ with $0<i<\delta$ or $0<j<\delta$ .
As a consequence, 
the size of the dual variables $Y^{(a,k)}$ reduces from $(n+a-2k)$ to 
\begin{align}
   (n+a-2k-\delta+2) \quad  &\text{for}  \quad k=0\,\, (a=0),\nn\\
    (n+a-2k-\delta+1) \quad  &\text{for}  \quad k\leq \delta \quad \text{and} \quad (n-a-k)\geq \delta
\end{align}
for pure codes (Proposition~\ref{rmk:dualpure}).
For other values of $(a,k)$, the matrix $Y^{(a,k)}$ is not reduced.
\end{proposition}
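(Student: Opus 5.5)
The plan is to work directly with the block matrices in the positivity constraint of SDP~\eqref{eq:sdpx_relax}. The $(a,k)$ block is indexed by $i,j\in\{k,\dots,n+a-k\}$, so it has size $n+a-2k+1$, and its $(i,j)$ entry is $\sum_{t,p}\alpha(i,j,t,p,a,k)x^{t,p}_{i,j}$. Under the pure-code constraint~\eqref{eq:klred_pure_xijtp}, every $x^{t,p}_{i,j}$ with $0<i<\delta$ or $0<j<\delta$ vanishes. Hence in each block, every row $i$ with $0<i<\delta$ is identically zero, and likewise every such column. A positive semidefinite matrix with a zero row and column is positive semidefinite if and only if the principal submatrix obtained by deleting that row and column is. So these rows and columns can be removed without changing the feasible set.

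By SDP duality, the dual matrix $Y^{(a,k)}$ then only needs to be indexed by the surviving rows. Concretely, the entries $Y^{(a,k)}_{i-k,j-k}$ with $i$ or $j$ in the removed range appear in the dual only through $y^{t,p}_{i,j}$. Those $y^{t,p}_{i,j}$ occur in no remaining dual constraint, by Proposition~\ref{rmk:dualpure}(ii). They also do not enter the objective, which involves only $y^{0,0}_{0,0}$ and the $Q_i$, $D_i$, $C_i$ terms; this should be checked, since $Q_i$ for $i<\delta$ is dropped by Proposition~\ref{rmk:dualpure}(i). We can therefore set those entries to zero, and positivity of $Y^{(a,k)}$ reduces to positivity of its principal submatrix on the surviving indices.

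The remaining work is counting the removed indices. The index range is $\{k,\dots,n+a-k\}$ and the removed set is $\{1,\dots,\delta-1\}$, so the removed count is $|\{k,\dots,n+a-k\}\cap\{1,\dots,\delta-1\}|$.
\begin{itemize}
\item For $k=0$ (hence $a=0$), the range $\{0,\dots,n\}$ contains all of $\{1,\dots,\delta-1\}$ provided $n\ge\delta-1$. That removes $\delta-1$ indices, leaving $n+1-(\delta-1)=n-\delta+2$.
\item For $1\le k\le\delta$ with the upper endpoint $n+a-k\ge\delta$, the intersection is $\{k,\dots,\delta-1\}$, which has $\delta-k$ elements. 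The remaining size is $n+a-2k+1-(\delta-k)$.
\end{itemize}

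The main obstacle is reconciling this count with the stated sizes. The statement writes sizes as $(n+a-2k)$, apparently counting block dimension minus one, and writes its second condition as $(n-a-k)\ge\delta$. So I would recheck the index conventions for $Y^{(a,k)}$ in~\eqref{eq:dual_solall}, where entries are written $Y^{(a,k)}_{i-k,j-k}$. In particular I would check whether the stated formula implicitly counts removals only at the lower endpoint, or whether the bound should use $n+a-k$. I would settle this by examining small cases such as $n=6$, $\delta=3$ and listing rows explicitly.

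For the remaining $(a,k)$ there is no reduction. Either $k\ge\delta$, so the range lies entirely at indices $\ge\delta$, or the range lies entirely inside $\{1,\dots,\delta-1\}$, in which case the block vanishes and can be dropped entirely. This gives the ``not reduced'' clause.
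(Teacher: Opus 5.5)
Your reduction argument is the paper's: purity forces $x^{t,p}_{i,j}=0$ whenever $0<i<\delta$ or $0<j<\delta$. The corresponding rows and columns of every block therefore vanish, positivity is equivalent to positivity of the remaining principal submatrix, and the dual blocks $Y^{(a,k)}$ shrink to match. The paper's proof stops at ``accordingly reduced''. Your check fills that in: the discarded entries of $Y^{(a,k)}$ feed only $y^{t,p}_{i,j}$ that appear in no surviving dual constraint and not in the objective, so restriction and zero-padding preserve dual feasibility and value.

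Your ``main obstacle'' is not a convention issue. Your count is correct, the second displayed size in the statement is wrong, and the paper's proof never checks it.
\begin{itemize}
\item Block $(a,k)$ is indexed by $\{k,\dots,n+a-k\}$, so its size is $n+a-2k+1$. The $k=0$ line, $n-\delta+2$, is only right if you start from $n+1$, so ``$(n+a-2k)$'' is off by one.
\item For $1\le k\le\delta-1$ the surviving indices are $\{\delta,\dots,n+a-k\}$, so the reduced size is $\max(0,\,n+a-k-\delta+1)$, not $n+a-2k-\delta+1$.
\item For $k\ge\delta$ nothing is removed.
\end{itemize}

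Concrete checks with $n=8$, $\delta=3$:
\begin{itemize}
\item Block $(0,1)$ keeps $\{3,\dots,7\}$, so its size is $5$; the stated formula gives $4$.
\item Block $(0,3)$ has indices $\{3,4,5\}$ and is untouched. It satisfies the stated conditions $k\le\delta$ and $n-a-k\ge\delta$, yet the formula assigns it size $0$.
\item For $n=6$, $\delta=3$, block $(0,3)$ also meets the stated conditions and gets the negative size $-2$.
\end{itemize}
So do not look for a convention that rescues the formula. State and prove the corrected count; the intended condition is presumably $k<\delta$ and $n+a-k\ge\delta$.

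Also fix your last paragraph. When $k\ge1$ and $n+a-k\le\delta-1$, the entire block is deleted; for example $n=6$, $\delta=4$, $(a,k)=(0,3)$ has the single index $3\in\{1,2,3\}$. That is a reduction to size $0$, so it contradicts the ``not reduced'' clause rather than supporting it.
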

\begin{proof}
For pure codes, the constraints in Eq.~\eqref{eq:klred_pure_xijtp} set $x^{t,p}_{i,j}=0$ for $0< i<\delta$ or $0<j<\delta$. As a consequence, all  rows and columns of the SDP variables containing $x^{t,p}_{i,j}$
with $0< i<\delta$ or $0<j<\delta$ are zero and thus, all these rows and columns can be removed. Since the size of the primal SDP constraints of Eq.~\eqref{eq:sdpx_relax} is reduced, the size of the corresponding dual variable $Y^{(a,k)}$ is accordingly reduced. This ends the proof.
\end{proof}

\subsection{Self-dual codes}\label{app:selfdual}

Here we relax the symmetry-reduced SDP for pure codes, i.e. SDP~\eqref{eq:sdpx_relax} with the additional constraints Eq.~\eqref{eq:klred_pure_xijtp}.
This relaxation is easier to compute and gives the same bounds for self-dual codes as the SDP pure code when $n\leq 19$.
The relaxation is obtained by the following 
modifications:
\begin{itemize}\setlength{\itemsep}{3pt}
    \item[(i)] The condition
    \begin{align}
        &x^{t,p}_{i,j}=x^{t',p'}_{i',j'}\,\,\,\, \quad \text{if} \quad  t-p=t'-p' \quad \text{is even} \quad \text{and} \nn \\
	&\quad\quad\quad\quad\quad\quad\,\, (i,j,i+j-t-p) \quad\text{a permutation of} \quad(i',j',i'+j'-t'-p')\,, \nn \\
    \end{align}
    is only applied in the case 
    $x^{i,i}_{i,i}=x^{0,0}_{i,0}$ for $0\leq i\leq n$.
    \item[(ii)] 
    The Knill-Laflamme conditions,
     \begin{align}
            K2^{-n}\sum^n_{i=0} K_j(i,n) \gamma^{0,0}_{i,0} x^{0,0}_{i,0}\quad = \quad \gamma^{0,0}_{j,0} x^{0,0}_{j,0} \quad\quad \text{for} \quad 0<j<\delta \,,
     \end{align}
    are dropped.
    \item[(iii)] The projector constraints,
    \begin{align}
    \sum_{\substack{(i,j,t,p)\in \II(n) \\ k=i+j-t-p}}  \gamma^{t,p}_{i,j} x^{t,p}_{i,j}=2^n \gamma^{0,0}_{k,0} x^{0,0}_{k,0}\,,
    \end{align}
    are only applied in the case $\sum^n_{i=0} \gamma^{0,0}_{i,0} x^{0,0}_{i,0}=2^n$, 
    where we used the fact that $\gamma^{0,0}_{0,0} x^{0,0}_{0,0}=1$.    
\end{itemize}

The dual of 
~\eqref{eq:sdpx_relax}
with modifications (i) -- (iii) is~\cite{munne2025sdpboundsquantumcodes},
\begin{align}\label{eq:dual_sol}
	\alpha \quad = \quad \max_{Y^{(a,k)}} \quad
	&   (2^n-1)w - y_{0,0}^{0,0} \nn \\
	\text{subject to}  \quad & Y^{(a,k)} \succeq 0\,, \nn \\
	& y^{t,p}_{i,j} = \frac{1}{\gamma^{t,p}_{i,j}}\sum^{\min(i,j)}_{k=0} \sum^k_{a=\max(i,j)+k-n} \alpha(i,j,t,p,a,k) Y^{(a,k)}_{i-k,j-k}\,, \nn \\
    & w = -y_{n,n}^{n,n}  - 2 y_{n,0}^{0,0}\,, \nn \\
	& w-y_{i,i}^{i,i} -2 y_{i,0}^{0,0} = 0
	\quad\quad\quad\text{for} \quad \delta \leq i < n \,, \nn \\
	&  y^{t,p}_{i,j}=0
	\hspace{8em}  \,\text{if} \quad i,j\neq 0 \quad \text{and}\quad t-p \quad  \text{is even } \quad  \nn \\
	&\hspace{11.6em}\quad \text{and} \quad i + j - t - p \geq \delta \,,
\end{align}
where $Y^{(a,k)}$ with $0\leq a\leq k\leq n+a-k$ are real square matrices of size $(n+a-2k)$.

This SDP relaxation can also be interpreted as a symmetry-reduced version of the Lov\'asz SDP~\eqref{eq:lovasz_SDP2}. 

\subsection{Additive codes}
While further constraints can be added to the dual SDP~\eqref{eq:dual_solall} in the case of additive codes, these did not have any impact for the parameter range considered here, $n\leq 19$.
For completeness, we list the modifications below.
\begin{proposition}\label{rmk:additive codes}
    Consider the SDP~\eqref{eq:dual_solall} 
    with the additional code constraints of Eq.~\eqref{eq:add_sym_red} and ~\eqref{eq:add_type_symred}
    except
    the integer constraint.
    Its dual is Eq.~\eqref{eq:dual_solall} with the following modifications:
\begin{itemize}
    \item[(i)] For even $i$, the variable $D_i$ is redefined as
    \begin{align}
    D_i = 2^{-n} \sum^{\delta-1}_{j=0} K_j(i,n)  C_{j} +\begin{cases}  2^{-n+1}  K_0(i,n)G    \quad \text{(Type I)} \,, \\  2^{-n}K_0(i,n)G \quad\hspace{0.32cm} \text{(Type II)} \,,\end{cases} 
    \end{align}
    \item[(ii)] The objective function now reads as $(KD_0 - C_0 - G) -y^{0,0}_{0,0}$,
    \item[(iii)] The variable $y^{t,p}_{i,j}$ is redefinied as
    \begin{align}
    y^{t,p}_{i,j}\ = \left(\frac{1}{\gamma^{t,p}_{i,j}} \sum^{\min(i,j)}_{k=0} \sum^k_{a=\max(i,j)+k-n} \alpha(i,j,t,p,a,k) Y^{(a,k)}_{i-k,j-k} \,,\right) + g^{t,p}_{i,j}\,, \nn \\
    \end{align}
 \end{itemize}
 where $G \in \R$ and $g^{t,p}_{i,j}\geq 0$ are new dual variables.
 \begin{proof} 
 Eq.~\eqref{eq:add_type} can be rewritten as 
\begin{align}
    \gamma^{0,0}_{0,0} x^{0,0}_{0,0}=\begin{cases}  K2^{-n+1}\sum_{i \,\text{even}} K_0(i,n)\gamma^{0,0}_{i,0} x^{0,0}_{i,0}  \quad \text{(Type I)} \,, \\   K2^{-n}\sum_{i \,\text{even}} K_0(i,n)\gamma^{0,0}_{i,0} x^{0,0}_{i,0} \quad\hspace{0.32cm} \text{(Type II)} \,.\end{cases} 
\end{align}
since $\gamma^{0,0}_{0,0} x^{0,0}_{0,0}=1$ and $K_0(i,n)=1$ for all $i$. This constraint will correspond to a dual variable $G$ and will modify the definition of $D_i$ (i) and the objective function (ii).
For (iii), the constraints $x^{t,p}_{ij}\geq 0$ leads to adding an extra nonnegative dual variables $g^{t,p}_{i,j}$ to the definition of $y^{t,p}_{i,j}$.  This ends the proof.
 \end{proof}
\end{proposition}

\end{document}